\documentclass[aps, 
twocolumn,
superscriptaddress,
preprintnumbers,amsmath,amssymb]{revtex4}

\usepackage{graphicx}
\usepackage[caption=false]{subfig}
\usepackage{booktabs}
\usepackage{diagbox}
\usepackage{amsthm}
\usepackage{tensor}
\usepackage{color}
\usepackage[all]{xy}
\usepackage{tikz}
\usepackage{dsfont}
\usepackage{times,txfonts}
\usetikzlibrary{positioning}
\usepackage{braket}
\usepackage{kotex}
\usepackage{mathtools}
\usepackage[T1]{fontenc}
\usepackage{enumitem}
\usepackage[toc,page,titletoc]{appendix}
\usepackage{hyperref}
\usepackage{xspace}
\setlist[enumerate]{label=(\roman*),align=left,labelsep=.5em,leftmargin=*,widest=ix,itemsep=0pt,parsep=0pt,topsep=.3\baselineskip}

\newtheorem{Thm}{Theorem}
\newtheorem{Lem}[Thm]{Lemma}
\newtheorem{Prop}[Thm]{Proposition}
\newtheorem{Cor}[Thm]{Corollary}
\theoremstyle{definition}
\newtheorem{Def}[Thm]{Definition}

\newtheorem{Example}[Thm]{Example}

\newtheorem{Conj}[Thm]{Conjecture}

\newcommand{\C}{\mathbb C}
\newcommand{\Tr}{\operatorname{Tr}}
\newcommand{\id}{\operatorname{id}}
\newcommand{\rank}{\operatorname{rank}}
\newcommand{\SR}{\operatorname{SR}}
\newcommand{\SN}{\operatorname{SN}}
\newcommand{\SEP}{\mathsf{SEP}}
\newcommand{\sPPT}{\mathsf{PPT}}
\newcommand{\cCP}{\mathcal{CP}}
\newcommand{\cPPT}{\mathcal{PPT}}

\newcommand{\sS}{\mathsf S}
\newcommand{\sX}{\mathsf X}
\newcommand{\sY}{\mathsf Y}
\newcommand{\sUND}{\mathsf{UND}}
\newcommand{\EB}{\mathrm{EB}}
\newcommand{\cSP}{\mathcal{SP}}
\newcommand{\cUND}{\mathcal{UND}}
\newcommand{\cX}{\mathcal X}
\newcommand{\cY}{\mathcal Y}
\newcommand{\M}{\mathsf M}
\newcommand{\cEB}{\mathcal{EB}}
\newcommand{\ox}{\otimes}
\newcommand{\Id}{\mathbb I}
\newcommand{\proj}[1]{\ket{#1}\!\bra{#1}}

\begin{document}

\title{Beyond the Positive Partial Transpose Squared Conjecture: The Qutrit Case}

\author{Junhyeong An}
\affiliation{Department of Mathematics and Research Institute for Basic Sciences, Kyung Hee University, Seoul 02447, Korea}

\author{Soojoon Lee}
\affiliation{Department of Mathematics and Research Institute for Basic Sciences, Kyung Hee University, Seoul 02447, Korea}
\affiliation{School of Computational Sciences, Korea Institute for Advanced Study, Seoul 02455, Korea}

\date{\today}

\begin{abstract}
Entanglement swapping is a fundamental operation in quantum repeaters for establishing entanglement between distant parties. 
The positive partial transpose (PPT) squared conjecture asks whether two PPT entangled links can generate terminal entanglement through entanglement swapping, or equivalently, whether the composition of two PPT maps is always entanglement breaking. 
Motivated by this conjecture, we investigate the map-composition problem beyond the PPT setting. 
For qutrit completely positive (CP) maps, we prove that the composition of any CP map whose Choi matrix is
$1$-undistillable with any CP map whose Choi matrix has Schmidt number at most two is entanglement breaking
in either order.  
Moreover, we show that the cone of $1$-undistillable CP maps is exactly the largest qutrit cone of CP maps whose composition with every CP map whose Choi matrix has Schmidt number at most two is entanglement breaking in both orders. 
Finally, although map composition captures only the standard maximally entangled outcome in entanglement swapping, we prove that any $1$-undistillable two-qutrit state and any state of Schmidt number at most two cannot generate terminal entanglement under an arbitrary selective measurement on the intermediate systems.
\end{abstract}

\maketitle

\section{Introduction}\label{sec:introduction}

Entanglement is one of the central resources in quantum communication~\cite{nielsen2010quantum}. 
In long-distance quantum communication, the basic task is not merely to create entanglement, but to distribute it reliably between distant parties. 
However, direct transmission of entangled states is fundamentally limited by loss and noise in realistic physical channels. 
Quantum repeaters were introduced to overcome this limitation by enabling entanglement distribution over long distances through shorter elementary links~\cite{briegel1998quantum,duan2001long}.

In the central strategy of a quantum repeater, several shorter entangled links are  connected by entanglement swapping, where a measurement on the intermediate systems attempts to generate entanglement between the terminal systems~\cite{zukowski1993event,bennett1993teleporting}. 
Thus the usefulness of a noisy entangled link is not determined only by whether the state is entangled.
It also depends on whether the link can properly participate in an entanglement swapping step that produces end-to-end entanglement.

A standard method for making noisy entanglement useful is entanglement distillation~\cite{bennett1996purification,deutsch1996quantum}, which extracts fewer highly entangled states from many copies of weakly entangled states. 
Such procedures, however, require multiple copies and quantum memory. 
It is therefore natural to ask whether entanglement can be achieved by a single entanglement swapping step without prior distillation.

This question becomes particularly subtle for positive partial transpose (PPT) entangled states~\cite{peres1996separability,horodecki1996necessary}, since some PPT entangled states have positive key rates in quantum key distribution~\cite{horodecki2005secure,chi2007bound} although they are nondistillable~\cite{horodecki1998mixed,divincenzo2000evidence,dur2000distillability}.
The PPT squared conjecture asks whether two PPT links can generate entanglement between the terminal systems through entanglement swapping~\cite{christandl2012ppt,christandl2019composed}. 
If the conjecture holds, then PPT entangled links cannot generate terminal entanglement, and hence cannot produce any secret key between the end parties in the repeater-based quantum key distribution system using PPT states as entangled links.

Using the Choi--Jamiołkowski correspondence~\cite{choi1975completely,jamiolkowski1972linear}, one obtains the map formulation of the PPT squared conjecture. 
Under this correspondence, PPT matrices correspond to PPT maps, and separable matrices correspond to entanglement breaking maps~\cite{horodecki2003entanglement}. 
The map formulation asks whether the composition of two arbitrary PPT maps is always entanglement breaking. 
In the qutrit case, this conjecture is known to hold, while it still remains open for higher-dimensional cases~\cite{chen2019positive,christandl2019composed}.

In this paper, we go beyond the PPT squared conjecture and consider a more general composition problem for cones of completely positive (CP) maps.
Specifically, we say that an ordered pair $(\cX,\cY)$ of cones of CP maps is \emph{EB-composable} if $\Phi\circ\Psi$ is entanglement breaking for every $\Phi\in\cX$ and $\Psi\in\cY$.
We refer to the problem of determining such ordered pairs as the \emph{EB-composability problem}, of which the PPT squared conjecture is the special case for $(\cX,\cY)=(\cPPT,\cPPT)$, where $\cPPT$ denotes the cone of PPT maps (see Fig.~\ref{fig:1}).

\begin{figure}
	\centering
	\includegraphics[width=1.0\linewidth]{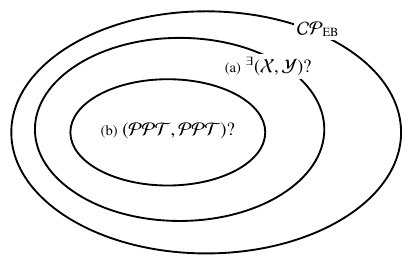}
	\caption{Schematic illustration of the EB-composability problem.
Let $\cCP_{\EB}$ be the set of ordered pairs
$(\Phi,\Psi)$ of CP maps such that $\Phi\circ\Psi$ is entanglement breaking.
(a)  An ordered pair of cones $(\cX,\cY)$ is EB-composable precisely when $\cCP_{\EB}$ contains the Cartesian product of the two cones $\cX\times\cY$; in the figure, we identify the ordered pair $(\cX,\cY)$ with this product.
(b) In particular, if, in every dimension, an EB-composable pair $(\cX,\cY)$ satisfies $\cPPT\subseteq\cX$ and $\cPPT\subseteq\cY$, then the PPT squared conjecture holds.
}
	\label{fig:1}
\end{figure}

In the two-qutrit setting, we exhibit EB-composable pairs beyond PPT.
We prove that both $(\cUND_1,\cSP_2)$ and $(\cSP_2,\cUND_1)$ are EB-composable, where $\cUND_1$ denotes the cone of CP maps whose Choi matrices are $1$-undistillable~\cite{horodecki1998mixed,dur2000distillability}, and $\cSP_2$ denotes the cone of CP maps whose Choi matrices have Schmidt number at most two~\cite{terhal2000schmidt}  (see Fig.~\ref{fig:2}).
Moreover, we show that when one is fixed to be $\cSP_2$, the largest compatible cone of CP maps is exactly $\cUND_1$, regardless of the order of composition.

\begin{figure}
	\centering
	\includegraphics[width=1.0\linewidth]{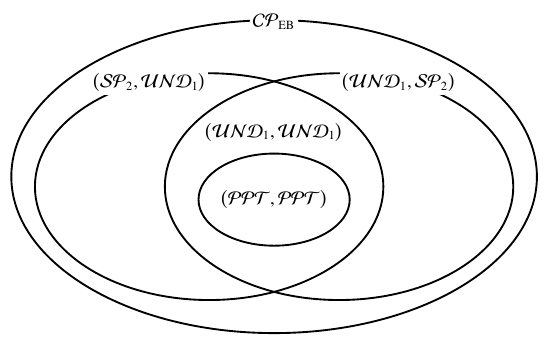}
	\caption{Schematic illustration of our EB-composability results in the qutrit setting.
For qutrit CP maps, $\cPPT\subseteq\cUND_1\subseteq\cSP_2$~\cite{chen2017schmidt}, and both $\cUND_1\times\cSP_2$ and $\cSP_2\times\cUND_1$ are contained in $\cCP_\EB$ (see Theorem~\ref{thm:und-sp2-eb-composable}).
Consequently, $(\cUND_1\times\cSP_2)\cap(\cSP_2\times\cUND_1)
=\cUND_1\times\cUND_1\subseteq\cCP_{\EB}$ (Corollary~\ref{cor:und-und-eb}).
Since $\cUND_1$ contains non-PPT maps, 
$\cPPT\times\cPPT\subsetneq\cUND_1\times\cUND_1$.
Moreover, when either component is fixed to be $\cSP_2$, the largest compatible cone for the other component is exactly $\cUND_1$ (see Theorem~\ref{thm:maximality-und1-sp2}).
}
	\label{fig:2}
\end{figure}

We next explain the meaning of this EB-composability result in the repeater picture. 
Map composition corresponds only to the standard maximally entangled post-selected outcome for Choi matrices in entanglement swapping, whereas a general selective repeater protocol allows arbitrary positive operators on the intermediate systems. 
We show that, for cones of bipartite positive semidefinite matrices that are closed under local filtering, EB-composability of the cones of the corresponding CP maps implies that any pair of states in the cones of matrices cannot generate entanglement between the two end systems through any post-selected entanglement swapping step.

We then apply this result to the two-qutrit state cones $\sUND_1$ and $\sS_2$ which are ones of 1-undistillable states and states with Schmidt number at most two, respectively.
By showing that these cones are closed under local filtering, we can prove that any two-qutrit state in $\sUND_1$ and any two-qutrit state in $\sS_2$ cannot produce end-to-end entanglement by any selective measurement on the intermediate systems, since the cones of the corresponding CP maps are EB-composable in both orders by our above results.
In particular, even when both elementary links are non-PPT entangled states in $\sS_2$ and one of them is distillable, their direct use in entanglement swapping without prior processing need not generate end-to-end entanglement, showing that distillability and non-PPT entanglement do not guarantee direct usefulness as a repeater link.

The remainder of this paper is organized as follows.
In Sec.~\ref{sec:channel-composition}, we define the EB-composability problem and prove the qutrit composition and maximality results.
In Sec.~\ref{sec:swapping}, we clarify the relation between map composition and entanglement swapping, and show how the map composition conclusion extends to entanglement swapping. 
Finally, in Sec.~\ref{sec:discussion}, we summarize the implications of our results and discuss remaining questions.

\section{Map composition problem beyond the PPT squared conjecture}
\label{sec:channel-composition}

Throughout this paper, we consider bipartite systems on $d\otimes d$, with particular focus on the two-qutrit case $d=3$.
To keep the main text focused on the main ideas, we use the standard notation and definitions introduced in Appendix~\ref{appen:preli}.

The PPT squared conjecture is motivated by entanglement swapping.
In its state-level formulation, one starts with two PPT states and asks whether a selective measurement can generate entanglement between the remaining systems.
The conjecture asserts that this never happens.
This gives the following state-level formulation.

\begin{Conj}[PPT squared conjecture in the state formulation]
\label{conj:ppt-state}
Let $\rho_{AB}$ and $\sigma_{CD}$ be PPT states on $\mathbb{C}^d\otimes\mathbb{C}^d$.
Then, for every positive operator $M_{BC}\ge0$, the post-selected unnormalized state
\[
\Tr_{BC}
[
(\rho_{AB}\otimes\sigma_{CD})
(I_A\otimes M_{BC}\otimes I_D)
]
\]
is separable on $AD$.
\end{Conj}

We now isolate the part of this problem that naturally leads to map composition.
Let $\ket{\phi_d^+}=\sum_{i=0}^{d-1}\ket{ii}$ be the (unnormalized) standard maximally entangled state.
For a linear map $\Phi:\M_d\to\M_d$, we call the matrix 
\begin{equation}
\label{eq:Choi}
C_\Phi
=
(\id_d\otimes\Phi)(\proj{\phi_d^+})
\end{equation}
the Choi matrix associated with $\Phi$.
If $C_\Psi$ is placed on $AB$ and $C_\Phi$ on $CD$, then post-selecting the standard maximally entangled outcome on $BC$ gives
\begin{equation}
\label{eq:map_composition}
\Tr_{BC}
[
(C_\Psi^{AB}\otimes C_\Phi^{CD})
(I_A\otimes\proj{\phi_d^+}_{BC}\otimes I_D)
]
=
C_{\Phi\circ\Psi}^{AD}.
\end{equation}
Consequently, separability of this post-selected state is equivalent to the composed map $\Phi\circ\Psi$ being entanglement breaking.

Under the Choi--Jamiołkowski correspondence, PPT states correspond to PPT maps, while separable states correspond to entanglement breaking maps.
Restricting the state formulation to the maximally entangled post-selection therefore gives a map formulation of the conjecture.
This formulation disregards arbitrary measurement outcomes and retains only the outcome in Eq.~\eqref{eq:map_composition}.
In this sense, the PPT squared conjecture leads to the following map composition problem.

\begin{Conj}[PPT squared conjecture in the map formulation]
\label{conj:ppt-channel}
For every pair of PPT maps $\Phi,\Psi:\M_d\to\M_d$, one has $\Phi\circ\Psi\in\cEB$. 
Equivalently, $\cPPT\circ\cPPT=\{\Phi\circ\Psi:\Phi,\Psi\in\cPPT\}\subseteq\cEB$, where $\cEB$ and $\cPPT$ denote the cones of entanglement breaking and PPT maps, respectively. 
\end{Conj}

Motivated by this formulation, we study a more general map composition problem.
The point is to replace the pair $(\cPPT,\cPPT)$  by other natural cones of CP maps and ask whether their compositions are always entanglement breaking.
Since the order of composition matters in general, we formulate the problem for ordered pairs.
This leads to the following definition.

\begin{Def}\label{def:EB-composable}
Let $\cX$ and $\cY$ be cones of CP maps from $\M_d$ to itself.
We say that the ordered pair $(\cX,\cY)$ is \emph{EB-composable} if $\cX\circ\cY\subseteq\cEB$, where $\cX\circ\cY=\{\Phi\circ\Psi: \Phi\in\cX, \Psi\in\cY\}$.
In other words, this means that $\Phi\circ\Psi$ is entanglement breaking for every $\Phi\in\cX$ and every $\Psi\in\cY$.
\end{Def}

The map formulation of the PPT squared conjecture asserts that $(\cPPT,\cPPT)$ is EB-composable.
Our main result concerns two different standard cones of CP maps, namely the cone $\cSP_2$ of $2$-superpositive maps and the cone $\cUND_1$ of $1$-undistillable maps.
The definitions and elementary properties of these cones are recalled in Appendix~\ref{appen:preli}.
Our first main result is the following two-qutrit composition theorem beyond the PPT.

\begin{Thm}
\label{thm:und-sp2-eb-composable}
In the qutrit setting, both $(\cUND_1,\cSP_2)$ and $(\cSP_2,\cUND_1)$ are EB-composable.
\end{Thm}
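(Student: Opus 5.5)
Our plan is to combine a duality argument with the qutrit structure of $1$-undistillability. The statement reduces to a single order, and then to a pairing between a map in $\cUND_1$ and a map of Schmidt rank at most two.

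\emph{Reduction to one order.} We first reduce the two orders to one. Taking adjoints (Hilbert--Schmidt duals) sends $C_\Phi$ to the swap of $C_\Phi$, up to complex conjugation. Both $\sS_2$ and $\sUND_1$ are invariant under swap and conjugation, so $\Phi^*$ lies in the same cone as $\Phi$. Since $\cEB$ is also closed under adjoints and $(\Phi\circ\Psi)^*=\Psi^*\circ\Phi^*$, EB-composability of $(\cUND_1,\cSP_2)$ implies that of $(\cSP_2,\cUND_1)$. Thus it suffices to treat $\Phi\in\cUND_1$ and $\Psi\in\cSP_2$.

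\emph{Extreme rays of $\cSP_2$.} Next, $\cSP_2$ is the cone generated by maps $\Psi(X)=KXK^*$ with $\rank K\le 2$ (Choi matrices $\proj{v}$ with Schmidt rank at most two). Since $\cEB$ is a convex cone, it suffices to show that $\Phi\circ\mathrm{Ad}_K$ is entanglement breaking for every $\Phi\in\cUND_1$ and every $K$ of rank at most two. Write $K=K'P$, where $P$ is a rank-two projection (or a partial isometry onto a two-dimensional subspace $V$). Then
\[
C_{\Phi\circ \mathrm{Ad}_K}=(K^{T}\otimes I)\,C_\Phi\,(\overline{K}\otimes I)
\]
up to transposition conventions. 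In other words, the Choi matrix of the composition is a local filtering of $C_\Phi$ on the first factor, with range inside a $2\otimes 3$ subsystem. The $1$-undistillability of $C_\Phi$ means that $\langle\psi|C_\Phi^{\Gamma}|\psi\rangle\ge0$ for every Schmidt-rank-two vector $\psi$. Restricting this condition to vectors supported on $V\otimes\C^3$ (all of which have Schmidt rank at most two, since one factor is two-dimensional) shows that the compressed operator $(P\otimes I)C_\Phi(P\otimes I)$ has positive partial transpose as a $2\otimes3$ operator. Multiplying by the local operator $K'$ preserves PPT. So the Choi matrix of $\Phi\circ\mathrm{Ad}_K$ is a PPT operator supported on $\C^2\otimes\C^3$. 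By the Peres--Horodecki criterion in $2\otimes3$, it is therefore separable, and hence $\Phi\circ\mathrm{Ad}_K\in\cEB$.

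\emph{Main obstacle.} The delicate step is the identification in the second paragraph: checking that $1$-undistillability, defined via Schmidt-rank-two vectors on $C_\Phi^\Gamma$, really gives full PPT of the compression to $V\otimes\C^3$. One must verify that the partial transpose commutes correctly with the local projection on the transposed factor, possibly with $P$ replaced by $\overline{P}$, and that every vector of $V\otimes\C^3$ indeed has Schmidt rank at most two. We would first carefully fix the Choi convention of Eq.~\eqref{eq:Choi} and check on which tensor factor $K$ acts, transposing onto the other factor if necessary (PPT is invariant under full transposition). If the compression lands on the output side instead, the same argument applies with $3\otimes2$ in place of $2\otimes3$.
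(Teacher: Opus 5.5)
Your argument is correct. For the order $\Phi\circ\Psi$ with $\Phi\in\cUND_1$, $\Psi\in\cSP_2$ it is the paper's proof. The Choi matrix $C_{\Phi\circ\mathrm{Ad}_K}=(K^T\ox I)C_\Phi(\overline{K}\ox I)$ is PPT because $\bra{\eta}(K^T\ox I)C_\Phi^\Gamma(\overline{K}\ox I)\ket{\eta}=\bra{\xi}C_\Phi^\Gamma\ket{\xi}$ with $\ket{\xi}=(\overline{K}\ox I)\ket{\eta}$ of Schmidt rank at most two. It is also supported on a $2\ox3$ subspace.

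Written this way, no factorization $K=K'P$ is needed. (You apply yours to $K$ rather than to the actual filter $K^T$, which is harmless.) The filter acts on the factor that is not transposed, so there is no $P$ versus $\overline{P}$ issue, and the step you flag as the main obstacle is routine.

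The genuine difference is the opposite order. The paper proves it directly: with rank-two Kraus operators $X_i$ of $\Psi$, $C_{\Psi\circ\Phi}=\sum_i(I\ox X_i)C_\Phi(I\ox X_i^\dagger)$. Each term is PPT by the output-side criterion of Proposition~\ref{prop:compression}(iii) and is supported on a $3\ox2$ subspace. You instead use three facts: $C_{\Phi^*}=\overline{FC_\Phi F}$ (with $F$ the swap), $(\Phi\circ\Psi)^*=\Psi^*\circ\Phi^*$, and invariance of $\sS_2$, $\sUND_1$, $\SEP$ under swap and conjugation. This is valid. The swap-invariance of $\sUND_1$ deserves its one-line check, since the partial transpose moves to the other factor: $\bra{\psi}(F\rho F)^\Gamma\ket{\psi}=\bra{\overline{F\psi}}\rho^\Gamma\ket{\overline{F\psi}}$, using $\rho^{\Gamma_A}=(\rho^{\Gamma_B})^T$.

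What each route buys:
\begin{itemize}
\item Your route needs only one compression statement. Because both cones are closed under adjoints, it also shows that EB-composability of $(\cUND_1,\cSP_2)$ and of $(\cSP_2,\cUND_1)$ are formally equivalent. This tempers the paper's remark that the second order is not a consequence of the first.
\item The paper's direct route avoids any symmetry considerations. It exercises the output-side characterization, Proposition~\ref{prop:compression}(iii) and (vi), which it then reuses for the implication (iii)$\Rightarrow$(i) in Theorem~\ref{thm:maximality-und1-sp2}.
\end{itemize}
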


\begin{proof}
See Appendix~\ref{appen:pfthm_und-sp2-eb-composable}.
\end{proof}

The two conclusions of Theorem~\ref{thm:und-sp2-eb-composable} correspond to the two possible orders of composition.
This is not a formal symmetry of EB-composability, since in general $\Phi\circ\Psi\in\cEB$ does not imply $\Psi\circ\Phi\in\cEB$.
Thus the theorem asserts a genuine two-sided property of the pair $(\cUND_1,\cSP_2)$, not merely a consequence of one order.
A simple example showing the failure of order symmetry in general is given in Appendix~\ref{app:order}.

This result should be compared with the PPT squared conjecture, which asks whether $(\cPPT,\cPPT)$ is EB-composable.
By contrast, Theorem~\ref{thm:und-sp2-eb-composable} concerns the mixed pair $(\cUND_1,\cSP_2)$ in the qutrit setting.
In this sense, it identifies a composition phenomenon of the same type beyond the PPT cone.

The result should not be interpreted as a converse statement.
The fact that $\Phi^2$ is entanglement breaking does not imply that $\Phi$ is $1$-undistillable.
The following example records this point.
It also motivates the maximality theorem below.

\begin{Example}
\label{ex:WH-square}
For $d\ge 3$, consider the Werner--Holevo channel
\[
\Lambda_{\mathrm{WH}}(X)=\frac1{d-1}(\Tr(X)I_d-X^T)
=
\frac1{d-1}\sum_{0\le i<j\le d-1}F_{ij}XF_{ij}^{\dagger},
\]
where $F_{ij}:=\ket{i}\bra{j}-\ket{j}\bra{i}$.
Since all Kraus operators $F_{ij}$ in this representation have rank at most two and the Choi matrix of $\Lambda_{\rm WH}$ is $1$-distillable~\cite{divincenzo2000evidence}, $\Lambda_{\mathrm{WH}}\in\cSP_2\setminus\cUND_1$.
Nevertheless, $\Lambda_{\mathrm{WH}}^2$ is entanglement breaking, since
$\Lambda_{\rm WH}^2(X)= \frac{1}{(d-1)^2}X+\frac{d-2}{(d-1)^2}\Tr(X)I_d$
is the depolarizing channel whose normalized Choi matrix is the isotropic state with $F:=\frac1{d^2}
\langle\phi_d^+|C_{\Lambda_{\mathrm{WH}}^2}|\phi_d^+\rangle
=\frac2{d(d-1)}\le\frac1d$, and hence it is  separable~\cite{horodecki1999reduction}.
Thus $\Phi^2\in\cEB$ does not imply $\Phi\in\cUND_1$.
\end{Example}

Example~\ref{ex:WH-square} raises a natural maximality question.
If the testing cone is fixed to be $\cSP_2$, one can ask how large the opposite cone can be while preserving EB-composability.
The next theorem shows that the answer is exactly $\cUND_1$.
Thus Theorem~\ref{thm:und-sp2-eb-composable} is not only a positive composition result, but also a maximality statement.

\begin{Thm}
\label{thm:maximality-und1-sp2}
Let $\cX$ be a cone of CP maps from \(\M_3\) to itself. 
Then the following are equivalent:
\begin{enumerate}
\item $\cX\subseteq\cUND_1$.
\item $(\cX,\cSP_2)$ is EB-composable.
\item $(\cSP_2,\cX)$ is EB-composable.
\end{enumerate}
Consequently, $\cUND_1$ is the largest cone of qutrit CP maps that is EB-composable with $\cSP_2$, on either side.
\end{Thm}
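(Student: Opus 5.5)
The implications (i)$\Rightarrow$(ii) and (i)$\Rightarrow$(iii) follow at once from Theorem~\ref{thm:und-sp2-eb-composable}. If $\cX\subseteq\cUND_1$, then $\cX\circ\cSP_2\subseteq\cUND_1\circ\cSP_2\subseteq\cEB$, and likewise $\cSP_2\circ\cX\subseteq\cEB$. The substance of the theorem is therefore the two converses. I will prove both by contraposition: given a CP map $\Phi\in\cX\setminus\cUND_1$, I will build a map $\Psi\in\cSP_2$ such that $\Phi\circ\Psi\notin\cEB$, and another such that $\Psi\circ\Phi\notin\cEB$.

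If $\Phi\notin\cUND_1$, then $C_\Phi$ is $1$-distillable. By the standard characterization, there is a Schmidt-rank-two vector $\ket{\psi}$ with $\bra{\psi}C_\Phi^{\Gamma}\ket{\psi}<0$. Equivalently, there are rank-two projections $P,Q$ on $\C^3$ such that $(P\otimes Q)C_\Phi(P\otimes Q)$, viewed as a two-qubit operator, is NPT.

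\textbf{Step 1: reduce to a compression.} Translate the local projections into compositions. On the output side, $(I\otimes Q)C_\Phi(I\otimes Q)$ is the Choi matrix of $\mathrm{Ad}_Q\circ\Phi$. On the input side, $(P\otimes I)C_\Phi(P\otimes I)=(P\otimes I)(\id\otimes\Phi)(\proj{\phi_3^+})(P\otimes I)$. Using the transpose trick $(P\otimes I)\ket{\phi_3^+}=(I\otimes P^T)\ket{\phi_3^+}$, this is the Choi matrix of $\Phi\circ\mathrm{Ad}_{P^T}$. Both $\mathrm{Ad}_{P^T}$ and $\mathrm{Ad}_Q$ have a single Kraus operator of rank two, so they lie in $\cSP_2$. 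Hence $C_{\Phi\circ\mathrm{Ad}_{P^T}}$ still carries the rank-two NPT block, because compressing it further by $I\otimes Q$ leaves an NPT two-qubit operator. It is therefore not PPT, hence not separable, and so $\Phi\circ\mathrm{Ad}_{P^T}\notin\cEB$. This proves (ii)$\Rightarrow$(i) with $\Psi=\mathrm{Ad}_{P^T}$.

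For (iii)$\Rightarrow$(i), I take $\Psi=\mathrm{Ad}_Q$. The map $\mathrm{Ad}_Q\circ\Phi$ has Choi matrix $(I\otimes Q)C_\Phi(I\otimes Q)$. Its further compression by $P\otimes I$ is the NPT block, and local projections preserve separability, so $\mathrm{Ad}_Q\circ\Phi\notin\cEB$.

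The one point to check carefully is the convention in Eq.~\eqref{eq:Choi}: the first tensor factor is the input reference and the second is the output. A transpose appears on the input projection; since $P^T$ is again a rank-two projection, it causes no difficulty. I must also make sure the definition of $\cSP_2$ in Appendix~\ref{appen:preli} is the one via Kraus operators of rank at most two, or equivalently Choi Schmidt number at most two. The Choi matrix of $\mathrm{Ad}_R$ is $\proj{(I\otimes R)\phi_3^+}$, a vector of Schmidt rank $\rank R=2$, so either form works.

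The only genuinely nontrivial input is the characterization of $1$-distillability used above: $C_\Phi$ is $1$-distillable exactly when some rank-two local projections yield an NPT two-qubit block. I will take this from the preliminaries, where $\cUND_1$ is defined, and this is where I expect any subtlety to lie. If the definition is instead stated through $\bra{\psi}C^\Gamma\ket{\psi}\ge0$ for Schmidt-rank-two $\ket{\psi}$, I convert it by letting $P$ and $Q$ be the projections onto the local supports of $\ket{\psi}$. The consequence statement then follows immediately, since (i)–(iii) are equivalent for every cone $\cX$, and $\cUND_1$ itself satisfies (ii) and (iii) by Theorem~\ref{thm:und-sp2-eb-composable}.
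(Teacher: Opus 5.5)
Your argument is correct and is essentially the paper's proof run contrapositively. The paper applies Proposition~\ref{prop:compression}(v) and (vi), whose proofs rest on the same identities $C_{\Phi\circ\mathrm{Ad}_{R^T}}=(R\otimes I_3)C_\Phi(R^\dagger\otimes I_3)$ and $C_{\mathrm{Ad}_R\circ\Phi}=(I_3\otimes R)C_\Phi(I_3\otimes R^\dagger)$ for rank-two $R$; the only procedural difference is that the paper reads off NPT-ness of the one-sided filter by evaluating its partial transpose at $\ket{\phi_3^+}$, rather than passing to a two-sided $2\otimes2$ block.

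There is one slip, in your final conversion. Since $[(P\otimes Q)C_\Phi(P\otimes Q)]^\Gamma=(P\otimes\bar Q)\,C_\Phi^\Gamma\,(P\otimes\bar Q)$, $Q$ must be the complex conjugate of the projection onto the second local support of $\ket{\psi}$, not that projection itself. With the unconjugated projection, both the $2\otimes2$ block and your witness $\mathrm{Ad}_Q\circ\Phi$ can turn out separable.
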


\begin{proof}
See Appendix~\ref{appen:pfthm_maximality-und1-sp2}.
\end{proof}

The maximality statement is relative to the full testing cone $\cSP_2$.
More precisely, it characterizes the qutrit CP maps whose composition with every map in $\cSP_2$ is entanglement breaking, in either order.
Thus, $\cUND_1$ is the largest cone that is EB-composable, on either side, with all maps whose Choi matrices have Schmidt number at most two.

Equivalently, the contrapositive shows that, for every $\Phi\notin\cUND_1$, there exist $\Psi_{\mathrm L},\Psi_{\mathrm R}\in\cSP_2$ such that $\Psi_{\mathrm L}\circ\Phi$ and $\Phi\circ\Psi_{\mathrm R}$ are not entanglement breaking, where the two testing maps need not coincide.
Thus, compositions with maps in $\cSP_2$ provide a way to determine whether a given CP map belongs to $\cUND_1$.
For a smaller testing cone $\cY\subsetneq\cSP_2$, however, some maps outside $\cUND_1$ may remain undetected, so the maximal compatible cone may be larger, as seen in Example~\ref{ex:WH-square}.

Theorem~\ref{thm:maximality-und1-sp2} gives the following square-type consequence.
We use the known two-qutrit fact that every $1$-undistillable two-qutrit state has Schmidt number at most two~\cite{chen2017schmidt}.
This says that $\cUND_1\subseteq\cSP_2$ for qutrit CP maps.
Combining this inclusion with Theorem~\ref{thm:und-sp2-eb-composable} gives the following result.

\begin{Cor}
\label{cor:und-und-eb}
For qutrit CP maps, $(\cUND_1,\cUND_1)$ is EB-composable.
\end{Cor}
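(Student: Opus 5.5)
The corollary follows directly from Theorem~\ref{thm:und-sp2-eb-composable} once we have the inclusion $\cUND_1\subseteq\cSP_2$ for qutrit CP maps. The plan is to establish that inclusion at the level of Choi matrices and then specialize the theorem.

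First, take $\Phi\in\cUND_1$, so that $C_\Phi$ is a $1$-undistillable positive semidefinite matrix on $\C^3\otimes\C^3$. If $C_\Phi=0$ it trivially lies in $\cSP_2$. Otherwise, normalize $C_\Phi/\Tr C_\Phi$ to a two-qutrit state. This state is still $1$-undistillable, since $1$-undistillability is invariant under positive scaling: it is defined via the positivity of $\langle\psi|(C_\Phi)^{\Gamma}|\psi\rangle$ on Schmidt-rank-two vectors $\psi$. By the known result of~\cite{chen2017schmidt}, every $1$-undistillable two-qutrit state has Schmidt number at most two. Hence $C_\Phi$ has Schmidt number at most two, and since Schmidt number is also scale invariant, $\Phi\in\cSP_2$. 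This gives $\cUND_1\subseteq\cSP_2$.

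Next, let $\Phi,\Psi\in\cUND_1$. By the inclusion just proved, $\Psi\in\cSP_2$. Since $(\cUND_1,\cSP_2)$ is EB-composable by Theorem~\ref{thm:und-sp2-eb-composable}, the composition $\Phi\circ\Psi$ is entanglement breaking. As $\Phi$ and $\Psi$ were arbitrary, $\cUND_1\circ\cUND_1\subseteq\cEB$, which is exactly the claim. Equivalently, one can use the implication (i)$\Rightarrow$(ii) of Theorem~\ref{thm:maximality-und1-sp2} with $\cX=\cUND_1$, restricted to the subcone $\cUND_1\subseteq\cSP_2$ in the second slot.

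The only step with any content is the inclusion $\cUND_1\subseteq\cSP_2$. Its subtle point is matching the paper's cone-level definitions, which may be given for unnormalized Choi matrices, with the state-level statement of~\cite{chen2017schmidt}. I would handle this by the normalization argument above, noting that both properties are invariant under multiplication by a positive scalar.
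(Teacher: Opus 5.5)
Your proposal is correct and matches the paper's proof: the paper also combines the known inclusion $\cUND_1\subseteq\cSP_2$ for two-qutrit states from~\cite{chen2017schmidt} with Theorem~\ref{thm:und-sp2-eb-composable}. Your normalization remark is a harmless extra detail that the paper leaves implicit.
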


\section{Repeater interpretation beyond map formulation}
\label{sec:swapping}

We now return to the operational state-level picture.
A repeater protocol may post-select on an arbitrary positive operator on $BC$, whereas map composition corresponds, for Choi matrices, only to the special post-selection $M_{BC}=|\phi_d^+\rangle\langle\phi_d^+|_{BC}$.
Thus one may wonder whether separability of the maximally entangled outcome,
\[
\Tr_{BC}[(\rho_{AB}\otimes\sigma_{CD})(I_A\otimes|\phi_d^+\rangle\langle\phi_d^+|_{BC}\otimes I_D)]\in\SEP(A:D),
\]
where $\SEP(A:D)$ denotes the cone of separable positive semidefinite operators between the systems $A$ and $D$, already implies separability for every positive operator $M_{BC}\ge0$:
\[
\Tr_{BC}[(\rho_{AB}\otimes\sigma_{CD})(I_A\otimes M_{BC}\otimes I_D)]\in\SEP(A:D).
\]
If this holds, then no selective measurement outcome on $BC$, whenever it occurs with positive probability, can generate entanglement between $A$ and $D$.

The following example shows that such an interpretation can be false.

\begin{Example}
\label{ex:postselected-outcome-not-enough}
Consider the two-qutrit pure states
\[
\ket{\psi_1}_{AB}
=
\frac{1}{\sqrt2}(\ket{00}+\ket{11}),
\quad
\ket{\psi_2}_{CD}
=
\frac{1}{\sqrt2}(\ket{00}+\ket{22}).
\]
For the post-selected $M_{BC}=\proj{\phi_3^+}_{BC}$ outcome, where
$\ket{\phi_3^+}_{BC} =\sum_{i=0}^2 \ket{ii}_{BC},$
we obtain
\[
{}_{BC}\!\bra{\phi_3^+}
(\ket{\psi_1}_{AB}\otimes\ket{\psi_2}_{CD})
=
\frac{1}{2}\ket{00}_{AD},
\]
which is a product vector.
Hence this post-selected outcome is separable.

On the other hand, let $\sigma_X\ket j=\ket{j+1\pmod 3}$, and consider another unnormalized Bell state
\[
\ket{\eta}_{BC}
=
(\sigma_X\otimes I)\ket{\phi_3^+}_{BC}.
\]
Since $\sigma_X^\dagger=\sigma_X^2$, we have
\[
\begin{aligned}
{}_{BC}\!\bra{\eta}
(\ket{\psi_1}_{AB}\otimes\ket{\psi_2}_{CD})
&=
{}_{BC}\!\bra{\phi_3^+}(\sigma_X^2\otimes I)
(\ket{\psi_1}_{AB}\otimes\ket{\psi_2}_{CD}) \\
&=
\frac{1}{2}
(\ket{02}_{AD}+\ket{10}_{AD}),
\end{aligned}
\]
which is entangled.
Thus separability of the post-selected outcome by $\ket{\phi_3^+}$ does not imply separability of every selective entanglement swapping outcome.
\end{Example}

The next result gives a condition under which the post-selected outcome by $\ket{\phi_d^+}$ is nevertheless sufficient to control all selective outcomes.
Here, being closed under local filtering means that whenever $\rho_{AB}\in\sX$, the operator $(L_A\otimes R_B)\rho_{AB}(L_A^\dagger\otimes R_B^\dagger)$ also belongs to $\sX$ for all local operators $L_A,R_B$, whenever it is nonzero.
We impose the analogous condition on $\sY$.
In this way, the cones are closed under the local changes produced by different measurement effects.

\begin{Thm}
\label{thm:closed-under-filtering-reduction}
Let $\sX$ and $\sY$ be cones of bipartite positive semidefinite matrices on $\C^d\otimes\C^d$. 
Suppose that both cones are closed under local filtering.
Assume that for every $\rho_{AB}\in\sX$ and every $\sigma_{CD}\in\sY$,
\[
\Tr_{BC}
[
(\rho_{AB}\otimes\sigma_{CD})
(I_A\otimes\proj{\phi_d^+}_{BC}\otimes I_D)
]
\in\SEP(A:D).
\]
Then, for every $\rho_{AB}\in\sX$, every $\sigma_{CD}\in\sY$, and every positive operator $M_{BC}\ge0$,
\[
\Tr_{BC}
[
(\rho_{AB}\otimes\sigma_{CD})
(I_A\otimes M_{BC}\otimes I_D)
]
\in\SEP(A:D).
\]
\end{Thm}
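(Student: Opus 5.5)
Since $M_{BC}\ge0$ has a spectral decomposition $M_{BC}=\sum_k\proj{m_k}_{BC}$, and the post-selected operator is linear in $M_{BC}$, it suffices to prove the claim for rank-one $M_{BC}=\proj{m}_{BC}$. The final result then follows because $\SEP(A:D)$ is a convex cone, closed under sums. The core observation is that every vector in $\C^d\otimes\C^d$ can be obtained from the unnormalized maximally entangled vector by a local operator: writing $\ket{m}=\sum_{ij}m_{ij}\ket{ij}$ and $K:=\sum_{ij}m_{ij}\ket{i}\bra{j}$, we have $\ket{m}_{BC}=(I_B\otimes K^{T})\ket{\phi_d^+}_{BC}$, or equally $(K\otimes I_C)\ket{\phi_d^+}_{BC}$. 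Thus $\proj{m}_{BC}=(I_B\otimes K^T)\proj{\phi_d^+}_{BC}(I_B\otimes \overline{K})$.

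I would then move the local operator $K^T$ off the $C$ register and onto $\sigma_{CD}$ using cyclicity of the partial trace over $BC$. The operators acting on $C$ can be cycled because the trace is taken over $C$, and operators on $A,D$ commute with those on $C$. This gives
\[
\Tr_{BC}[(\rho_{AB}\otimes\sigma_{CD})(I_A\otimes\proj{m}_{BC}\otimes I_D)]
=\Tr_{BC}[(\rho_{AB}\otimes\sigma'_{CD})(I_A\otimes\proj{\phi_d^+}_{BC}\otimes I_D)],
\]
where $\sigma'_{CD}=(\overline{K}\otimes I_D)\sigma_{CD}(\overline{K}\otimes I_D)^\dagger$. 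I would verify the adjoint placement carefully: since $(K^T)^\dagger=\overline K$, the correct filter is $\overline K\,\cdot\,K^T$ on $C$. The operator $\sigma'_{CD}$ is a local filtering of $\sigma_{CD}$ with $L_C=\overline K$ and $R_D=I$.

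If $\sigma'_{CD}\neq0$, closure of $\sY$ under local filtering gives $\sigma'\in\sY$, and the hypothesis then yields separability on $AD$. If $\sigma'_{CD}=0$, the post-selected operator is $0$, which lies in $\SEP(A:D)$. Summing over $k$ completes the argument. Closure of $\sX$ is not even needed; alternatively, the filter could be placed on $B$ and absorbed into $\rho$ instead.

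The main obstacle is only bookkeeping: checking the transpose identity $(I\otimes X)\ket{\phi^+}=(X^T\otimes I)\ket{\phi^+}$ and the cyclic move inside a partial trace, so that the filter appears with the correct conjugation on $C$.
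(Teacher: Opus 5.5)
Your proof is correct and is essentially the paper's argument: reduce to rank-one effects by the spectral decomposition, write the pure effect as a local operator applied to $\ket{\phi_d^+}_{BC}$, and use cyclicity of the partial trace over $BC$ together with closure under local filtering to absorb that operator into one of the input states. The differences are cosmetic. The paper writes $\ket{\eta}=(X\otimes I_C)\ket{\phi_d^+}$ and absorbs $X$ into $\rho_{AB}$, so it uses closure of $\sX$, whereas you absorb $\overline K$ into $\sigma_{CD}$ and use closure of $\sY$. You also handle the case where the filtered state is zero explicitly, which the paper leaves implicit.
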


\begin{proof}
See Appendix~\ref{appen:entanglement swapping}.
\end{proof}

We first note that the cone of PPT matrices, denoted by $\sPPT$, is closed under local filtering. 
Hence, when $\sX=\sY=\sPPT$, Theorem~\ref{thm:closed-under-filtering-reduction} shows that the maximally entangled post-selection formulation and the arbitrary selective-measurement formulation of the PPT squared conjecture are equivalent.
The proof shows that this passage is not specific to PPT states; the essential assumption is being closed under local filtering. 
Consequently, the same argument applies to any pair of cones
closed under local filtering.

We now apply Theorem~\ref{thm:closed-under-filtering-reduction} to $\sUND_1$ and $\sS_2$, which are defined as the cones of all $1$-undistillable positive semidefinite matrices and all positive semidefinite matrices with Schmidt number at most two on $\mathbb{C}^d\otimes\mathbb{C}^d$, respectively.

By the Choi--Jamiołkowski correspondence, Theorem~\ref{thm:und-sp2-eb-composable} implies that, for every
$\rho_{AB}\in\sUND_1$ and $\sigma_{CD}\in\sS_2$, in either order, 
\[
\Tr_{BC}
[
(\rho_{AB}\otimes\sigma_{CD})
(I_A\otimes\proj{\phi_3^+}_{BC}\otimes I_D)
]
\in \SEP(A:D).
\]
Since \(\sUND_1\) and \(\sS_2\) are closed under local filtering, as recalled in Appendix~\ref{appen:preli}, Theorem~\ref{thm:closed-under-filtering-reduction} extends this conclusion from the maximally entangled effect to every positive effect \(M_{BC}\ge0\).
This gives the state-level entanglement swapping consequence.

\begin{Cor}
\label{cor:und-s2-arbitrary-swapping}
Let $\rho_{AB}$ and $\sigma_{CD}$ be two-qutrit states.
Suppose that
\[
\rho_{AB}\in\sUND_1\text{ and } \sigma_{CD}\in\sS_2, 
\text{ or }
\rho_{AB}\in\sS_2\text{ and }  \sigma_{CD}\in\sUND_1.
\]
Then, for every positive operator $M_{BC}\ge0$,
\[
\Tr_{BC}
[
(\rho_{AB}\otimes\sigma_{CD})
(I_A\otimes M_{BC}\otimes I_D)
]
\in\SEP(A:D).
\]
\end{Cor}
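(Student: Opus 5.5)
The plan is to read Corollary~\ref{cor:und-s2-arbitrary-swapping} as a direct combination of Theorem~\ref{thm:und-sp2-eb-composable} and Theorem~\ref{thm:closed-under-filtering-reduction}. The work lies in checking that the hypotheses of the latter hold for the cones $\sX=\sUND_1$, $\sY=\sS_2$, and for the reversed pair $\sX=\sS_2$, $\sY=\sUND_1$, all with $d=3$.

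\textbf{Step 1: the maximally entangled outcome.} Any positive semidefinite $\rho_{AB}$ on $\C^3\otimes\C^3$ is the Choi matrix $C_\Psi$ of a unique CP map $\Psi:\M_3\to\M_3$ under Eq.~\eqref{eq:Choi}. By definition, $\rho_{AB}\in\sUND_1$ iff $\Psi\in\cUND_1$, and $\rho_{AB}\in\sS_2$ iff $\Psi\in\cSP_2$. Similarly write $\sigma_{CD}=C_\Phi$.

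Eq.~\eqref{eq:map_composition} then gives
\[
\Tr_{BC}[(\rho_{AB}\otimes\sigma_{CD})(I_A\otimes\proj{\phi_3^+}_{BC}\otimes I_D)]=C^{AD}_{\Phi\circ\Psi}.
\]
In the first case $\Psi\in\cUND_1$ and $\Phi\in\cSP_2$, so the relevant pair is $(\cSP_2,\cUND_1)$. In the second case $\Psi\in\cSP_2$ and $\Phi\in\cUND_1$, giving the pair $(\cUND_1,\cSP_2)$. Theorem~\ref{thm:und-sp2-eb-composable} covers both orders, so $\Phi\circ\Psi$ is entanglement breaking. Its Choi matrix is therefore in $\SEP(A:D)$, which is the hypothesis of Theorem~\ref{thm:closed-under-filtering-reduction} for both assignments of cones. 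The one point to watch is the order convention in Eq.~\eqref{eq:map_composition}: the state on $AB$ corresponds to the inner map.

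\textbf{Step 2: closure under local filtering.} For $\sS_2$: if $\rho=\sum_k\proj{v_k}$ with each $v_k$ of Schmidt rank at most two, then $(L\otimes R)\rho(L\otimes R)^\dagger=\sum_k\proj{(L\otimes R)v_k}$. Local operators cannot increase Schmidt rank, so the filtered matrix is again in $\sS_2$.

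For $\sUND_1$, I will use the characterization that $\rho$ is $1$-undistillable iff $\bra{\psi}\rho^{T_B}\ket{\psi}\ge0$ for all $\psi$ of Schmidt rank at most two. Partial transposition gives
\[
((L\otimes R)\rho(L\otimes R)^\dagger)^{T_B}=(L\otimes\bar R)\rho^{T_B}(L\otimes\bar R)^\dagger.
\]
Testing against a Schmidt-rank-$\le2$ vector $\psi$ thus reduces to testing $\rho^{T_B}$ against $(L\otimes\bar R)^\dagger\psi$, which still has Schmidt rank at most two. Hence the filtered matrix is $1$-undistillable. Both facts are recalled in Appendix~\ref{appen:preli}, and I would cite them there.

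\textbf{Step 3: conclusion.} Theorem~\ref{thm:closed-under-filtering-reduction}, applied once for each ordering of the cones, yields separability for every $M_{BC}\ge0$.

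No step is a real obstacle, since the heavy lifting is done by the two theorems. The only care needed is matching the ordered pair of map cones to the placement of states on $AB$ versus $CD$. That is exactly why both orders in Theorem~\ref{thm:und-sp2-eb-composable} are required.
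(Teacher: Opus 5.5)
Your proposal is correct and follows the paper's own argument. Eq.~\eqref{eq:map_composition} together with both orders of Theorem~\ref{thm:und-sp2-eb-composable} gives separability of the $\ket{\phi_3^+}$ outcome on the full cones, with the $AB$ state correctly taken as the inner map; then the closure of $\sUND_1$ and $\sS_2$ under local filtering (Appendix~\ref{appen:preli}) lets Theorem~\ref{thm:closed-under-filtering-reduction} extend this to every $M_{BC}\ge0$.
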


\begin{figure}
	\centering
	\includegraphics[width=1.0\linewidth]{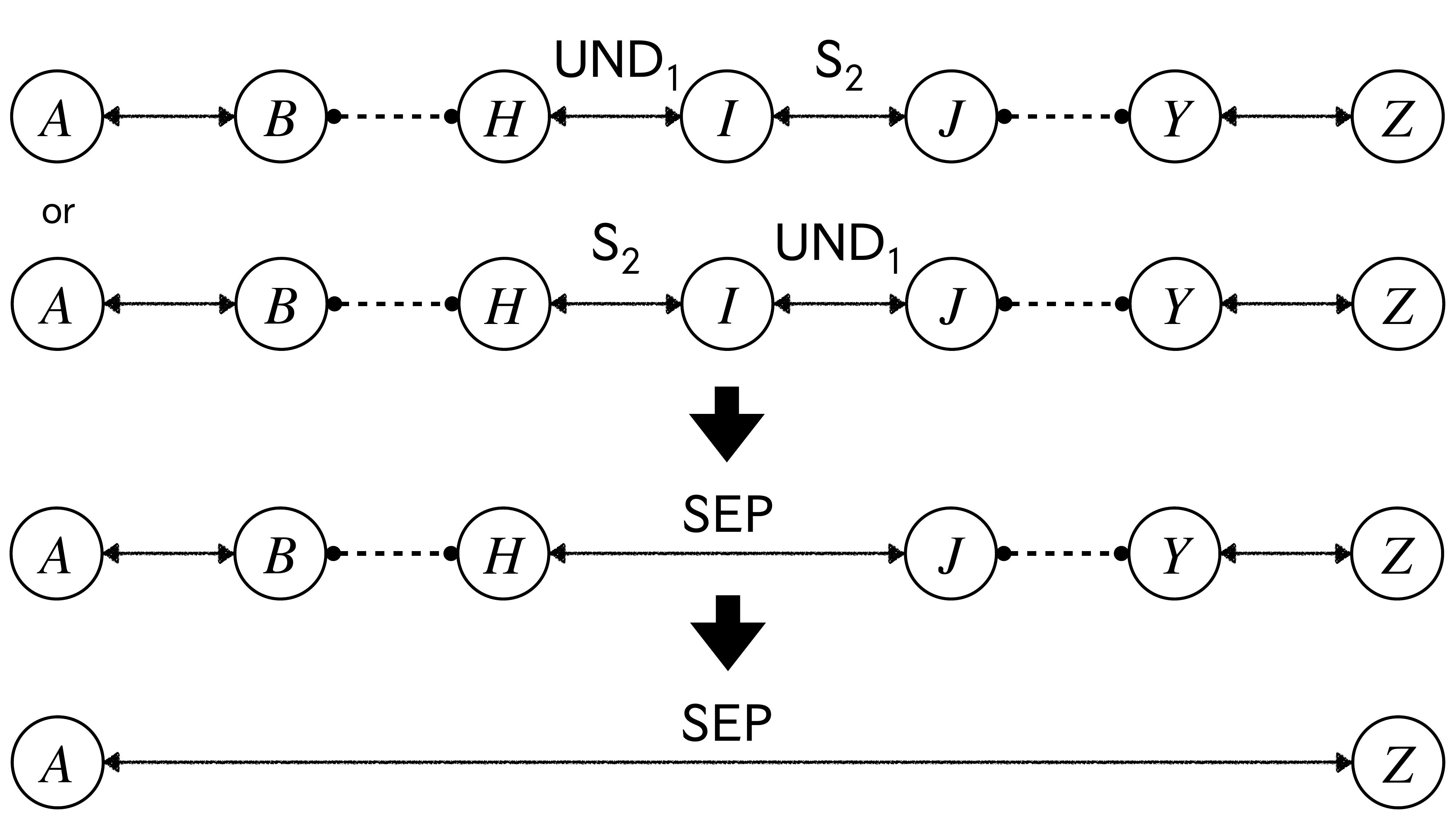}
	\caption{
Entanglement-swapping obstruction caused by a $1$-undistillable link.
When a two-qutrit link in $\sUND_1$ is joined to a link in $\sS_2$, in either order, every post-selected entanglement swapping outcome is separable.
Consequently, the resulting state is separable, and subsequent entanglement swapping cannot generate end-to-end entanglement between $A$ and $Z$.
}
	\label{Fig:repeater}
\end{figure}

Corollary~\ref{cor:und-s2-arbitrary-swapping} gives the entanglement swapping interpretation of Theorem~\ref{thm:und-sp2-eb-composable}.
The universal obstruction in this result arises not from the Schmidt number two condition alone, but from the $1$-undistillability of one link: regardless of the ordering, such a link makes every selective swapping outcome with any link in $\sS_2$ separable, as illustrated in Fig.~\ref{Fig:repeater}.
By contrast, Example~\ref{ex:postselected-outcome-not-enough} shows that membership of both links in $\sS_2$ alone does not determine whether selective entanglement swapping succeeds or fails.

Together with Theorem~\ref{thm:maximality-und1-sp2}, this obstruction is sharp with respect to the testing class $\sS_2$.
Indeed, for every $1$-distillable state, there exists, in each ordering, a suitable partner in $\sS_2$ such that the standard maximally entangled post-selection produces an entangled terminal state, where the partners for the two orderings need not coincide.
This does not mean that entanglement swapping succeeds with every partner in $\sS_2$: even a distillable link in $\sS_2$ cannot generate direct end-to-end entanglement when paired with a $1$-undistillable link, showing that distillability and non-PPT entanglement do not guarantee direct usefulness as a repeater link without prior processing.

\section{Discussion}\label{sec:discussion}

This paper extends the PPT squared problem to the EB-composability problem for cones of CP maps.
The known qutrit PPT squared theorem establishes EB-composability for the pair $(\cPPT,\cPPT)$.
We prove that the same entanglement breaking conclusion holds for $(\cUND_1,\cSP_2)$ and $(\cSP_2,\cUND_1)$. 
Thus $1$-undistillability and Schmidt number two provide a structural obstruction to entanglement generation under map composition.

In addition, our maximality theorem shows that this composition result cannot be enlarged on the $\cUND_1$ side when the testing cone is fixed to be $\cSP_2$.
Indeed, any qutrit CP map outside $\cUND_1$ fails to be EB-composable with $\cSP_2$ in at least one order.
Thus $\cUND_1$ is not merely a sufficient cone for the proof, but the maximal cone compatible with EB-composability against all $2$-superpositive maps in the qutrit system.

For the state-level interpretation, one must distinguish map composition from general entanglement swapping. 
Map composition represents only one maximally entangled post-selected outcome for Choi matrices, while a selective repeater protocol allows arbitrary positive operators on the intermediate systems. 
We show that being closed under local filtering bridges this gap for the cones considered here. 
Consequently, if $\rho_{AB}\in\sUND_1$ and $\sigma_{CD}\in \sS_2$ are two-qutrit states, then no selective measurement on $BC$ can generate entanglement between $A$ and $D$.

This result should not be understood as a characterization of all CP maps whose square is entanglement breaking. 
There exist CP maps whose Choi matrices are $1$-distillable while their squares are nevertheless entanglement breaking. 
Thus the condition $\Phi^2\in\cEB$ can be weaker than the condition $C_\Phi\in\sUND_1$, as seen in Example~\ref{ex:WH-square}, while
the maximality result in this paper concerns the sharp boundary of EB-composability against the testing cone $\cSP_2$.

The present proof is essentially specific to the two-qutrit setting. 
It relies on the fact that PPT implies separability on $2\otimes3$ and $3\otimes2$ systems~\cite{peres1996separability,horodecki1996necessary}. 
In higher dimensions, the same argument only gives PPT operators supported on $2\otimes d$ subspaces, which need not be separable. A
natural next step is to determine which testing cones beyond $\cSP_2$, or which additional assumptions beyond $1$-undistillability, lead to analogous EB-composability results.

Another natural direction is to compare EB-composability with eventual entanglement breaking behavior. 
EB-composability concerns the composition of maps from two prescribed cones, whereas eventual entanglement breaking asks when repeated composition of a single map becomes entanglement breaking~\cite{lami2015entanglement,rahaman2018eventually}. 
These two notions capture different ways in which entanglement can disappear under map composition.
Understanding their relation may clarify how distillability, Schmidt number, and iteration jointly control the loss of entanglement.

\begin{acknowledgments}
We are grateful to Alexander Streltsov and the participants of the seminar hosted at his institute for valuable comments and insightful discussions. 
This research was supported by
the Institute of Information \& Communications Technology Planning \& Evaluation (IITP) grant funded by the Ministry of Science and ICT (MSIT) (No. RS-2025-02304540).
S.L. acknowledges support from the National Research Foundation of Korea (NRF) of Korea grants funded by the MSIT  (No. RS-2024-00432214 and No. RS-2022-NR068791) and Creation of the Quantum Information Science R\&D Ecosystem (No. RS-2023-NR068116) through the NRF funded by the MSIT.
\end{acknowledgments}

\bibliography{EB}

\appendix

\section{Preliminaries}\label{appen:preli}

This appendix collects the notations and elementary facts used in the main text.
We first recall Schmidt rank, Schmidt number, and superpositive maps.
We then present equivalent formulations of $1$-undistillability that are used in the proof of Theorem~\ref{thm:und-sp2-eb-composable}.
Finally, we note that the relevant state cones are closed under local filtering.

A linear map \(\Phi:\M_d\to \M_d\) is positive if \(\Phi(M)\ge0\) for every \(M\ge0\). 
It is CP if
\[
\Id_k\otimes\Phi:\M_k\otimes \M_d\to \M_k\otimes \M_d
\]
is positive for every \(k\ge1\). 

For a vector $\ket{\psi}\in\C^d\otimes\C^d$, its Schmidt rank is denoted by $\SR(\ket{\psi})$, and since
\begin{equation}
\label{eq:SR}
\ket{\psi}=(X\otimes I_d)\ket{\phi_d^+},
\end{equation}
for some matrix $X$, we define $\SR(\ket{\psi})=\rank X$.

The Schmidt number of a positive semidefinite matrix $\rho\in \M_d\otimes \M_d$ is defined by~\cite{terhal2000schmidt}
\[
\SN(\rho)
=
\min_{\rho=\sum_i \proj{\psi_i}}
\max_i \SR(\ket{\psi_i}),
\]
where the minimum is taken over all decompositions of $\rho$ into rank-one matrices.
We denote by $\sS_n$ the convex cone of positive semidefinite matrices with Schmidt number at most $n$.
We denote by $\SEP$ the cone of separable positive semidefinite operators.
Thus $\sS_1$ coincides with $\SEP$.

$\Phi$ is called entanglement breaking if and only if $C_\Phi\in\SEP$.
More generally, $\Phi$ is called $n$-superpositive~\cite{skowronek2009cones}
(also known as an $n$-partially entangled breaking map~\cite{chruscinski2006partially}) if and only if it admits a Kraus representation whose Kraus operators all have rank at most $n$, or equivalently, $C_\Phi\in\sS_n$.
We denote the cone of $n$-superpositive maps by $\cSP_n$.

A CP map $\Phi:\M_d\to \M_d$ is called a PPT map if its Choi matrix is PPT, i.e., $C_\Phi^\Gamma = (\Id\otimes T)C_\Phi\ge0,$ where $T$ is the transpose map.
Equivalently, $\Phi$ is PPT if and only if $T\circ\Phi$ is CP.

We next recall distillability.
For $n\ge1$, a positive semidefinite matrix $\rho\in\M_d\ox\M_d$ is called $n$-distillable if there exists a vector $\ket\psi\in(\C^d)^{\ox n}\ox(\C^d)^{\ox n}$ with $\SR(\ket\psi)\le2$ such that
\[
\bra\psi(\rho^{\ox n})^\Gamma\ket\psi<0.
\]
Otherwise, $\rho$ is called $n$-undistillable~\cite{horodecki1998mixed,divincenzo2000evidence,dur2000distillability}.
We denote by $\sUND_n$ the cone of $n$-undistillable positive semidefinite matrices.
In particular, $\sUND_1$ is a convex cone.
A map $\Phi$ is called $n$-undistillable if its Choi matrix $C_\Phi$ belongs to $\sUND_n$, and we denote the corresponding cone of maps by $\cUND_n$.
In particular, $\cUND_1$ is a convex cone of CP maps.

The following proposition is about the equivalent forms of $1$-undistillability used in the main proofs.
It shows that the condition $\Phi\in\cUND_1$ can equivalently be expressed as follows: for every $\Psi\in\cSP_2$, both $\Phi\circ\Psi$ and $\Psi\circ\Phi$ belong to $\cPPT$.
This is the bridge between $1$-undistillability and PPT conditions for compositions with maps in $\cSP_2$.

\begin{Prop}\label{prop:compression}
Let $\Phi:\M_d\to \M_d$ be a CP map, and let $\rho=C_\Phi$ be its Choi matrix. 
Then the following are equivalent:
\begin{enumerate}
    \item $\Phi\in\cUND_1$.
    \item $(X\ox I_d)\rho(X^\dagger\ox I_d)$ is PPT for every matrix $X\in\M_d$ with $\rank X\le2$.
    \item $(I_d\ox Y)\rho(I_d\ox Y^\dagger)$ is PPT for every matrix $Y\in\M_d$ with $\rank Y\le2$.
    \item For every pure state $\ket{\psi}\in\C^d\ox\C^d$ with $\SR(\ket{\psi})\le2$, $(\Id_d\otimes \Phi)(\proj{\psi})$ is PPT.
    \item For every $\sigma\in\sS_2$, $(\Id_d\otimes \Phi)(\sigma)$ is PPT.
    \item For every $\Psi\in\cSP_2$, $(\Id_d\otimes \Psi)(\rho)$ is PPT.
\end{enumerate}
\end{Prop}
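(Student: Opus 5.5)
The plan is to prove the cycle (i)$\Leftrightarrow$(ii), (i)$\Leftrightarrow$(iii), (ii)$\Leftrightarrow$(iv)$\Leftrightarrow$(v), and (iii)$\Leftrightarrow$(vi). Everything rests on one identity. For any $\ket{\psi}=(X\ox I)\ket{\phi_d^+}$ we have
\[
\bra\psi\rho^\Gamma\ket\psi=\bra{\phi_d^+}\bigl[(X^\dagger\ox I)\rho(X\ox I)\bigr]^\Gamma\ket{\phi_d^+},
\]
up to transposing $X$ where the partial transpose acts. More generally, partial transposition on the second factor commutes with conjugation by $X\ox I$. It becomes conjugation by $I\ox\bar Y$ when the filter $I\ox Y$ sits on the transposed side, and $\bar Y$ has the same rank as $Y$. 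A matrix is PPT iff all such expectations are nonnegative. I will also use that every vector of Schmidt rank at most two can be written as $(A\ox B)\ket{\phi_d^+}$ with $A,B$ of rank at most two.

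\textbf{(i)$\Leftrightarrow$(ii), (iii).} Suppose (ii) fails. Then some $\ket{v}$ has $\bra v[(X\ox I)\rho(X^\dagger\ox I)]^\Gamma\ket v<0$. Moving $X$ across gives $\bra{w}\rho^\Gamma\ket{w}<0$ with $\ket w=(X^\dagger\ox I)\ket v$. Since $\rank X^\dagger\le2$, we get $\SR(\ket w)\le2$, so $\rho$ is $1$-distillable.

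Conversely, suppose $\bra\psi\rho^\Gamma\ket\psi<0$ with $\SR(\ket\psi)\le2$. Let $X$ be the projector onto the span of the first Schmidt factor, so $\rank X\le 2$ and $(X\ox I)\ket\psi=\ket\psi$. Then $\bra\psi[(X\ox I)\rho(X\ox I)]^\Gamma\ket\psi=\bra\psi\rho^\Gamma\ket\psi<0$, so (ii) fails. The equivalence with (iii) is the same argument on the second factor, using the rank-preserving $Y\mapsto\bar Y$.

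\textbf{(iv), (v).} We have $(\Id\ox\Phi)(\proj\psi)=(X\ox I)C_\Phi(X^\dagger\ox I)$ when $\ket\psi=(X\ox I)\ket{\phi_d^+}$, and $\rank X=\SR(\ket\psi)$. So (iv) is exactly (ii). Statement (v) follows from (iv) because $\sS_2$ is the cone generated by such rank-one projectors, the map $\Id\ox\Phi$ is linear, and PPT is a convex cone. The converse (v)$\Rightarrow$(iv) is trivial.

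\textbf{(vi).} Write $\Psi\in\cSP_2$ with Kraus operators $K_j$ of rank at most $2$. Then $(\Id\ox\Psi)(\rho)=\sum_j (I\ox K_j)\rho(I\ox K_j^\dagger)$. Each term is PPT by (iii), and PPT is closed under sums, so (iii)$\Rightarrow$(vi). Conversely, for a fixed $Y$ of rank at most $2$, the map $\Psi(Z)=YZY^\dagger$ lies in $\cSP_2$, which recovers (iii).

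The main obstacle is the bookkeeping in (i)$\Leftrightarrow$(iii), where the filter acts on the transposed factor. I will verify carefully that $[(I\ox Y)\rho(I\ox Y^\dagger)]^\Gamma=(I\ox\bar Y)\rho^\Gamma(I\ox Y^T)$, so that the rank-two condition is preserved. Relating (vi) to the composition statements in the surrounding text is not needed here: it only uses $C_{\Psi\circ\Phi}=(\Id\ox\Psi)(C_\Phi)$ together with the analogous identity $C_{\Phi\circ\Psi}=(\Id\ox\Phi)(C_\Psi)$, which is (v).
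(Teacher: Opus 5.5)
Your proposal is correct and follows essentially the same route as the paper: the same chain of equivalences, and the same device of moving the rank-two filter across the partial transpose (with $Y^T$ appearing on the transposed side). It also uses the same Choi identity for (ii)$\Leftrightarrow$(iv), convexity for (v), and Kraus operators for (vi). The only cosmetic difference is in (ii)$\Rightarrow$(i) and (iii)$\Rightarrow$(i): you test the filtered matrix on $\ket\psi$ itself, using the projector onto its local Schmidt support (with $Y=\bar P$ on the second side), whereas the paper writes $\ket\xi=(X^\dagger\ox I_d)\ket{\phi_d^+}$ and tests on $\ket{\phi_d^+}$; both work equally well.
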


\begin{proof}
We first prove the equivalence between (i) and (ii).
Suppose that \(C_\Phi\in\sUND_1\), let \(X\in\M_d\) have rank at most two, and set
\[
\tau_X=(X\otimes I_d)C_\Phi(X^\dagger\otimes I_d).
\]
For every \(\ket{\eta}\in\C^d\otimes\C^d\),
\[
\bra{\eta}\tau_X^\Gamma\ket{\eta}
=
\bra{\xi}C_\Phi^\Gamma\ket{\xi},
\quad
\ket{\xi}=(X^\dagger\otimes I_d)\ket{\eta}.
\]
Since \(\rank X\le2\), we have \(\SR(\ket{\xi})\le2\).
The \(1\)-undistillability of \(C_\Phi\) gives
\[
\bra{\xi}C_\Phi^\Gamma\ket{\xi}\ge0.
\]
Hence \(\tau_X^\Gamma\ge0\), so \(\tau_X\) is PPT.

Conversely, suppose that
\[
(X\otimes I_d)C_\Phi(X^\dagger\otimes I_d)
\]
is PPT for every \(X\in\M_d\) with \(\rank X\le2\).
Let \(\ket{\xi}\) be an arbitrary vector with \(\SR(\ket{\xi})\le2\).
Then there exists \(X\in\M_d\) with \(\rank X\le2\) such that
\[
\ket{\xi}=(X^\dagger\otimes I_d)\ket{\phi_d^+}.
\]
For
\[
\tau_X=(X\otimes I_d)C_\Phi(X^\dagger\otimes I_d),
\]
the PPT assumption gives
\[
0\le
\bra{\phi_d^+}\tau_X^\Gamma\ket{\phi_d^+}
=
\bra{\xi}C_\Phi^\Gamma\ket{\xi}.
\]
Since this holds for every vector \(\ket{\xi}\) with Schmidt rank at most two,
\(C_\Phi\) is \(1\)-undistillable. This proves \((i)\Leftrightarrow(ii)\).

We next prove the equivalence between (i) and (iii).
Suppose that \(C_\Phi\in\sUND_1\), let \(Y\in\M_d\) have rank at most two, and set
\[
\omega_Y=(I_d\otimes Y)C_\Phi(I_d\otimes Y^\dagger).
\]
For every \(\ket{\eta}\in\C^d\otimes\C^d\),
\[
\bra{\eta}\omega_Y^\Gamma\ket{\eta}
=
\bra{\zeta}C_\Phi^\Gamma\ket{\zeta},
\quad
\ket{\zeta}=(I_d\otimes Y^T)\ket{\eta}.
\]
Since \(\rank Y\le2\), we have \(\SR(\ket{\zeta})\le2\).
Thus
\[
\bra{\eta}\omega_Y^\Gamma\ket{\eta}\ge0
\]
for every \(\ket{\eta}\), and hence \(\omega_Y\) is PPT.

Conversely, suppose that
\[
(I_d\otimes Y)C_\Phi(I_d\otimes Y^\dagger)
\]
is PPT for every \(Y\in\M_d\) with \(\rank Y\le2\).
Let \(\ket{\zeta}\) be an arbitrary vector with \(\SR(\ket{\zeta})\le2\).
Then there exists \(Y\in\M_d\) with \(\rank Y\le2\) such that
\[
\ket{\zeta}=(I_d\otimes Y^T)\ket{\phi_d^+}.
\]
For
\[
\omega_Y=(I_d\otimes Y)C_\Phi(I_d\otimes Y^\dagger),
\]
the PPT assumption gives
\[
0\le
\bra{\phi_d^+}\omega_Y^\Gamma\ket{\phi_d^+}
=
\bra{\zeta}C_\Phi^\Gamma\ket{\zeta}.
\]
Since this holds for every vector \(\ket{\zeta}\) with Schmidt rank at most two,
\(C_\Phi\) is \(1\)-undistillable. This proves \((i)\Leftrightarrow(iii)\).

We now prove \((ii)\Leftrightarrow(iv)\).
Let \(\ket{\psi}\in\C^d\otimes\C^d\) be a vector with \(\SR(\ket{\psi})\le2\).
Then there exists \(X\in\M_d\) with \(\rank X\le2\) such that
\[
\ket{\psi}=(X\otimes I_d)\ket{\phi_d^+}.
\]
Using the definition of the Choi matrix, we obtain
\[
(\Id_d\otimes\Phi)(\proj{\psi})
=
(X\otimes I_d)C_\Phi(X^\dagger\otimes I_d).
\]
Therefore \((\Id_d\otimes\Phi)(\proj{\psi})\) is PPT for every
\(\ket{\psi}\) with Schmidt rank at most two if and only if every
rank-at-most-two local filtering of \(C_\Phi\) on the first subsystem is PPT.
This proves \((ii)\Leftrightarrow(iv)\).

Next, we prove \((iv)\Leftrightarrow(v)\).
Assume that $(iv)$ holds and let \(\sigma\in\sS_2\). Then \(\sigma\) admits a finite decomposition
\[
\sigma=\sum_i\proj{\psi_i},
\quad
\SR(\ket{\psi_i})\le2
\]
for all \(i\). Hence
\[
(\Id_d\otimes\Phi)(\sigma)
=
\sum_i(\Id_d\otimes\Phi)(\proj{\psi_i}).
\]
Each summand is PPT by the assumption, and therefore
\((\Id_d\otimes\Phi)(\sigma)\) is PPT.
The converse is immediate by taking
\(\sigma=\proj{\psi}\) with \(\SR(\ket{\psi})\le2\).
Thus \((iv)\Leftrightarrow(v)\).

Finally, we prove \((iii)\Leftrightarrow(vi)\).
Assume first that $(iii)$ holds and let \(\Psi\in\cSP_2\).
Then \(\Psi\) has a Kraus representation
\[
\Psi(M)=\sum_i X_iMX_i^\dagger,
\quad
\rank X_i\le2
\]
for all \(i\). Therefore
\[
(\Id_d\otimes\Psi)(C_\Phi)
=
\sum_i
(I_d\otimes X_i)C_\Phi(I_d\otimes X_i^\dagger).
\]
By the assumption, each summand is PPT, and hence
\((\Id_d\otimes\Psi)(C_\Phi)\) is PPT.
This proves \((iii)\Rightarrow(vi)\).

Conversely, assume that $(vi)$ holds. Let \(Y\in\M_d\) be a matrix with \(\rank Y\le2\).
Define the CP map $\Psi_Y$ as
\[
\Psi_Y(M)=YMY^\dagger.
\]
Then \(\Psi_Y\in\cSP_2\).
By the assumption,
\[
(\Id_d\otimes\Psi_Y)(C_\Phi)
=
(I_d\otimes Y)C_\Phi(I_d\otimes Y^\dagger)
\]
is PPT. 
Thus \((iii)\Leftrightarrow(vi)\), and the proof is complete.
\end{proof}

Finally, we verify the local filtering property used in Sec.~\ref{sec:swapping}.
The cones $\SEP$, $\sPPT$, $\sS_n$, and $\sUND_n$ are closed under local filtering.
That is, if $\rho$ belongs to one of these cones, then $(L\ox R)\rho(L^\dagger\ox R^\dagger)$
belongs to the same one for all local operators $L$ and $R$, whenever the resulting operator is nonzero.

For $\rho\in\sPPT$,
\[
[(L\ox R)\rho(L^\dagger\ox R^\dagger)]^\Gamma
=
(L\ox R^*)\rho^\Gamma(L^\dagger\ox R^T),
\]
which preserves positivity.

For $\rho\in\sS_n$, write $\rho=\sum_i \ket{\psi_i}\bra{\psi_i}$ with $\SR(\ket{\psi_i})\le n.$ 
Then
\[
(L\ox R)\rho(L^\dagger\ox R^\dagger)
=
\sum_i (L\ox R)\ket{\psi_i}\bra{\psi_i}(L^\dagger\ox R^\dagger),
\]
and
\[
\SR((L\ox R)\ket{\psi_i})\le \SR(\ket{\psi_i})\le n,
\]
which implies $(L\ox R)\rho(L^\dagger\ox R^\dagger)\in\sS_n$.

Lastly, for $\rho=\rho_{AB}\in\sUND_n$ and any vector  $\ket\psi\in(\C^d)^{\ox n}\ox(\C^d)^{\ox n}$ with $\SR(\ket{\psi})\le2$,
we have
\[
\begin{aligned}
&\bra{\psi}
\left[((L\otimes R)\rho_{AB}(L^\dagger\otimes R^\dagger))^{\otimes n}\right]^\Gamma
\ket{\psi}  \\
&=
\bra{\xi}(\rho_{AB}^{\otimes n})^\Gamma\ket{\xi}\ge0,
\end{aligned}
\]
where
\[
\ket{\xi}
=
\left(\bigotimes_{k=1}^n
L_{A_k}^\dagger\otimes R_{B_k}^T
\right)\ket{\psi},
\]
and since $\SR(\ket{\xi})\le\SR(\ket{\psi})\le2$, the above inequality follows from
$\rho_{AB}\in\sUND_n$.
Hence
\[
(L\otimes R)\rho_{AB}(L^\dagger\otimes R^\dagger)\in\sUND_n.
\]


\section{Proof of Theorem~\ref{thm:und-sp2-eb-composable}}
\label{appen:pfthm_und-sp2-eb-composable}
Let $\Phi\in\cUND_1$ and $\Psi\in\cSP_2$.
Since $C_\Psi\in \sS_2$, there is a decomposition
\[
C_\Psi=\sum_i\proj{\psi_i},
\quad \SR(\ket{\psi_i})\le2
\]
for every $i$. 
By Eq.~\eqref{eq:SR}, for each $i$, $\ket{\psi_i}=(X_i\otimes I_3)\ket{\phi_3^+}$ for some $X_i\in\mathsf{M}_3$ with $\rank(X_i)\le 2.$
Then
\[
\begin{aligned}
C_{\Phi\circ\Psi}
&=
(\Id_3\otimes\Phi)(C_\Psi)\\
&=
(\Id_3\otimes \Phi)\sum_i 
(X_i\otimes I_3)\proj{\phi_3^+}(X_i^\dagger\otimes I_3)\\
&=
\sum_i
(X_i\otimes I_3)C_\Phi(X_i^\dagger\otimes I_3).
\end{aligned}
\]

Since $C_\Phi$ is $1$-undistillable, Proposition~\ref{prop:compression} implies that each summand is PPT on a $2\ox 3$ support. 
Since PPT is equivalent to separability in $2\ox 3$ systems~\cite{peres1996separability,horodecki1996necessary}, every summand is separable.
Hence $C_{\Phi\circ\Psi}$ is separable.


For the opposite order, use the Kraus representation of $\Psi$.
Since $\Psi\in\cSP_2$, we may write
\[
\Psi(M)=\sum_i X_iMX_i^\dagger,
\quad
\rank(X_i)\le2
\quad
\text{for all }i .
\]
Then
\[
\begin{aligned}
C_{\Psi\circ\Phi}
&=
(\Id_3\otimes\Psi)(C_\Phi)\\
&=
\sum_i 
(I_3\otimes X_i)C_\Phi(I_3\otimes X_i^\dagger).
\end{aligned}
\]
Since $C_\Phi$ is $1$-undistillable, Proposition~\ref{prop:compression} implies that each summand is PPT.
Moreover, each summand is supported on a $3\otimes2$ subspace.
Since PPT is equivalent to separability in $3\otimes2$ systems~\cite{peres1996separability,horodecki1996necessary}, each summand is separable.
Therefore $C_{\Psi\circ\Phi}$ is separable.
\(\square\)


\section{Order of composition}\label{app:order}

In this section, we give a simple example showing that the order of composition cannot be ignored in general.
That is, EB-composability need not be symmetric for arbitrary channel classes.

Consider two-qubit channels \(\Phi,\Psi:\M_2\to \M_2\) defined by
\[
\Phi
\left(
\begin{bmatrix}
m_{00} & m_{01}\\
m_{10} & m_{11}
\end{bmatrix}
\right)
=
\begin{bmatrix}
m_{00}+\frac12 m_{11} & \frac{1}{\sqrt2}m_{01}\\
\frac{1}{\sqrt2}m_{10} & \frac12 m_{11}
\end{bmatrix}
\]
and
\[
\Psi
\left(
\begin{bmatrix}
m_{00} & m_{01}\\
m_{10} & m_{11}
\end{bmatrix}
\right)
=
\begin{bmatrix}
\frac15 m_{00} & \frac{1}{\sqrt5}m_{01}\\
\frac{1}{\sqrt5}m_{10} & \frac45 m_{00}+m_{11}
\end{bmatrix}.
\]
Then
\[
(\Phi\circ\Psi)
\left(
\begin{bmatrix}
m_{00} & m_{01}\\
m_{10} & m_{11}
\end{bmatrix}
\right)
=
\begin{bmatrix}
\frac35 m_{00}+\frac12 m_{11} &
\frac1{\sqrt{10}}m_{01}\\
\frac1{\sqrt{10}}m_{10} &
\frac25 m_{00}+\frac12 m_{11}
\end{bmatrix},
\]
and
\[
(\Psi\circ\Phi)
\left(
\begin{bmatrix}
m_{00} & m_{01}\\
m_{10} & m_{11}
\end{bmatrix}
\right)
=
\begin{bmatrix}
\frac15 m_{00}+\frac1{10} m_{11} &
\frac1{\sqrt{10}}m_{01}\\
\frac1{\sqrt{10}}m_{10} &
\frac45 m_{00}+\frac9{10} m_{11}
\end{bmatrix}.
\]
In the computational basis
$\{\ket{00},\ket{01},\ket{10},\ket{11}\}$, the corresponding Choi matrices are
\[
C_{\Phi\circ\Psi}
=
\begin{bmatrix}
\frac35 & 0 & 0 & \frac{1}{\sqrt{10}}\\
0 & \frac25 & 0 & 0\\
0 & 0 & \frac12 & 0\\
\frac{1}{\sqrt{10}} & 0 & 0 & \frac12
\end{bmatrix}
,\quad
C_{\Phi\circ\Psi}^\Gamma=
\begin{bmatrix}
\frac35 & 0 & 0 & 0\\
0 & \frac25 & \frac{1}{\sqrt{10}} & 0\\
0 & \frac{1}{\sqrt{10}} & \frac12 & 0\\
0 & 0 & 0 & \frac12
\end{bmatrix}
\]
and
\[
C_{\Psi\circ\Phi}
=
\begin{bmatrix}
\frac15 & 0 & 0 & \frac{1}{\sqrt{10}}\\
0 & \frac45 & 0 & 0\\
0 & 0 & \frac{1}{10} & 0\\
\frac{1}{\sqrt{10}} & 0 & 0 & \frac{9}{10}
\end{bmatrix}
,\quad
C_{\Psi\circ\Phi}^\Gamma=
\begin{bmatrix}
\frac15 & 0 & 0 & 0\\
0 & \frac45 & \frac{1}{\sqrt{10}} & 0\\
0 & \frac{1}{\sqrt{10}} & \frac{1}{10} & 0\\
0 & 0 & 0 & \frac{9}{10}
\end{bmatrix}
.
\]

The only nontrivial $2\times2$ blocks of $C_{\Phi\circ\Psi}^{\Gamma}$ and $C_{\Psi\circ\Phi}^{\Gamma}$ have determinants
$\frac25\cdot\frac12-\frac1{10}=\frac1{10}>0$ and
$\frac45\cdot\frac1{10}-\frac1{10}=-\frac1{50}<0$, respectively.
Thus $C_{\Phi\circ\Psi}$ is PPT, whereas $C_{\Psi\circ\Phi}$ is non-PPT.
Since these are two-qubit matrices, the first one is separable, while the second is entangled.
Therefore \(\Phi\circ\Psi\) is entanglement breaking but \(\Psi\circ\Phi\) is not.
This shows that, in general, the EB-composability of one order of composition does not imply the EB-composability of the opposite order.


\section{Proof of Theorem~\ref{thm:maximality-und1-sp2}}\label{appen:pfthm_maximality-und1-sp2}

The implications \((i)\Rightarrow(ii)\) and \((i)\Rightarrow(iii)\) are exactly Theorem~\ref{thm:und-sp2-eb-composable}.

Conversely, assume that \((\mathcal X,\cSP_2)\) is EB-composable.
Let \(\Phi\in\mathcal X\). Then for every \(\Psi\in\cSP_2\),
\[
C_{\Phi\circ\Psi}
=
(\Id_3\otimes\Phi)(C_\Psi)
\]
is separable, and hence PPT. 
As \(\Psi\) ranges over \(\cSP_2\), the
Choi matrix \(C_\Psi\) ranges over \(\sS_2\). 
Therefore
\[
(\Id_3\otimes\Phi)(\sigma)
\]
is PPT for every \(\sigma\in\mathcal \sS_2\). 
By
Proposition~\ref{prop:compression} $(v)$, we have
$\Phi\in\cUND_1$.
Hence \(\mathcal X\subseteq\cUND_1\).

Now assume that \((\cSP_2,\cX)\) is EB-composable.
Let \(\Phi\in\mathcal X\). 
Then for every \(\Psi\in\cSP_2\),
\[
C_{\Psi\circ\Phi}
=
(\Id_3\otimes\Psi)(C_\Phi)
\]
is separable, and hence PPT. 
By Proposition~\ref{prop:compression} $(vi)$, this implies $\Phi\in\cUND_1$.
Hence \(\mathcal X\subseteq\cUND_1\).

Combining the implications proves the equivalence.
\(\square\)


\section{Entanglement swapping}\label{appen:entanglement swapping}
In this section, we prove Theorem~\ref{thm:closed-under-filtering-reduction}. 
In fact, we prove a slightly stronger equivalence between several formulations of entanglement swapping for cones of states that are closed under local filtering.

Let $\sX$ and  $\sY$ be cones of bipartite positive semidefinite operators on $\C^d\ox\C^d$ and assume that both cones are closed under local filtering.
Throughout this section, the input operators are always taken to satisfy $\rho_{AB}\in \sX$ and $\sigma_{CD}\in \sY$.

\noindent
\textbf{(A) $\ket{\phi_d^+}$-measurement formulation.}
For all $\rho_{AB}\in \sX$ and $\sigma_{CD}\in \sY$,
\[
\Tr_{BC}
[
(\rho_{AB}\ox\sigma_{CD})
(I_A\ox \ket{\phi^+}\bra{\phi^+}_{BC}\ox I_D)
]
\in \SEP(A:D).
\]

\medskip

\noindent
\textbf{(B) Pure state measurement formulation.}
For all $\rho_{AB}\in \sX$ and $\sigma_{CD}\in \sY$, and for
every pure state $|\eta\rangle_{BC}$,
\[
\Tr_{BC}
[
(\rho_{AB}\ox\sigma_{CD})
(I_A\ox |\eta\rangle\langle\eta|_{BC}\ox I_D)
]
\in \SEP(A:D).
\]

\medskip

\noindent
\textbf{(C) Arbitrary measurement formulation.}
For all $\rho_{AB}\in \sX$ and $\sigma_{CD}\in \sY$, and for
every positive operator $M_{BC}\ge0$,
\[
\Tr_{BC}
[
(\rho_{AB}\ox\sigma_{CD})
(I_A\ox M_{BC}\ox I_D)
]
\in \SEP(A:D).
\]

\medskip

\noindent
\textbf{(D) Locally filtered formulation.}
For all $\rho_{AB}\in \sX$ and $\sigma_{CD}\in \sY$, every
positive operator $M_{BC}\ge0$, and all local operators $L_A,R_D\in \M_d$,
\[
(L_A\ox R_D)
\Tr_{BC}
[
(\rho_{AB}\ox\sigma_{CD})
(I_A\ox M_{BC}\ox I_D)
]
(L_A^\dagger\ox R_D^\dagger)
\in \SEP(A:D).
\]

\medskip

\noindent
\textbf{(E) Separable-CP-map formulation.}
For all $\rho_{AB}\in \sX$ and $\sigma_{CD}\in \sY$, and every
separable completely positive map
\[
\Lambda: \M_d\otimes (\M_d \otimes \M_d)\otimes\M_d
\longrightarrow
\M_d\otimes (\M_d \otimes \M_d)\otimes\M_d
\]
with respect to the tripartition $A|BC|D$, one has
\[
\Tr_{BC}[\Lambda(\rho_{AB}\ox\sigma_{CD})]
\in \SEP(A:D).
\]

The implications
\[
(E)\Rightarrow(D)\Rightarrow(C)\Rightarrow(B)\Rightarrow(A)
\]
are immediate from the definitions. 
Indeed, (A) is the special case of (B) with $|\eta\rangle=\ket{\phi^+}$, (B) is the rank-one special case of
(C), (C) is the case of (D) with $L_A=I_A$ and $R_D=I_D$, and (D) is a special case of (E).

\begin{Lem}\label{lem:ABCDE-swapping-equivalence}
Let $\sX$ and $\sY$ be cones of bipartite positive semidefinite matrices on $\C^d\ox\C^d$.  
Suppose that they are closed under local filtering.
Then the formulations (A)–(E) above are equivalent.
\end{Lem}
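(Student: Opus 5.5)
Since the chain $(E)\Rightarrow(D)\Rightarrow(C)\Rightarrow(B)\Rightarrow(A)$ has already been observed, it suffices to prove $(A)\Rightarrow(E)$. I would do this in two stages: first $(A)\Rightarrow(B)$ using local filtering on the intermediate systems, then $(B)\Rightarrow(E)$ by decomposing a separable CP map into product Kraus operators and absorbing the $A$ and $D$ parts into the output and the $BC$ part into an effect.

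\emph{Step 1: $(A)\Rightarrow(B)$.} Take a pure vector $\ket{\eta}_{BC}$ and write it, as in Eq.~\eqref{eq:SR}, as $\ket{\eta}=(X_B\otimes I_C)\ket{\phi_d^+}_{BC}$ with $X\in\M_d$. Then
\[
\Tr_{BC}[(\rho_{AB}\ox\sigma_{CD})(I_A\ox\proj{\eta}_{BC}\ox I_D)]
=\Tr_{BC}[(\rho'_{AB}\ox\sigma_{CD})(I_A\ox\proj{\phi_d^+}_{BC}\ox I_D)],
\]
where $\rho'_{AB}=(I_A\ox X_B^\dagger)\rho_{AB}(I_A\ox X_B)$. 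This uses only cyclicity of the partial trace over $B$, since $X_B$ acts only on $B$. By closure of $\sX$ under local filtering, $\rho'\in\sX$ whenever $\rho'\neq0$. If $\rho'=0$, the output is $0\in\SEP$. Hence (A) gives (B). Alternatively, one can split $X$ between the $B$ and $C$ sides, but one side suffices.

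\emph{Step 2: $(B)\Rightarrow(E)$.} A separable CP map with respect to $A|BC|D$ has Kraus operators $L_A^{(k)}\ox K_{BC}^{(k)}\ox R_D^{(k)}$. For each term,
\[
\Tr_{BC}[(L\ox K\ox R)(\rho\ox\sigma)(L\ox K\ox R)^\dagger]
=(L\ox R)\Tr_{BC}[(\rho\ox\sigma)(I_A\ox K^\dagger K\ox I_D)](L\ox R)^\dagger .
\]
Write $K^\dagger K=\sum_j\proj{\eta_j}$ by spectral decomposition. By (B), each $\Tr_{BC}[(\rho\ox\sigma)(I\ox\proj{\eta_j}\ox I)]$ is separable on $A:D$. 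The cone $\SEP(A:D)$ is closed under sums and under local conjugation $L\ox R$, so each term is separable, and summing over $k$ and $j$ gives (E).

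The only delicate point is the bookkeeping in Step 1. One must check that the transferred operator on $B$ appears as a genuine local filtering $(I\ox X^\dagger)\rho(I\ox X)$, so the closure hypothesis applies in the form stated, and one must handle the degenerate zero case. One also needs the identity $\bra{\phi_d^+}(X^\dagger\ox I)=\bra{\eta}$ to be moved into the trace correctly. Both are short verifications. Finally, Theorem~\ref{thm:closed-under-filtering-reduction} follows immediately as $(A)\Rightarrow(C)$.
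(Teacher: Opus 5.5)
Your proof is correct and takes essentially the paper's route. The key step $(A)\Rightarrow(B)$ is the same: you move $X$ onto the $B$ side to obtain the local filtering $(I_A\ox X^\dagger)\rho_{AB}(I_A\ox X)$. Your direct $(B)\Rightarrow(E)$ argument via product Kraus operators merges the paper's chain $(B)\Rightarrow(C)\Rightarrow(D)\Rightarrow(E)$ into one computation, and your explicit handling of the case $\rho'=0$ covers a point the paper leaves implicit.
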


\begin{proof}

It remains to prove $(A)\Rightarrow(E)$.  
We first prove $(A)\Rightarrow(B)$.  
Let $|\eta\rangle_{BC}$ be arbitrary.
There exists $X\in M_d$ such that
\[
|\eta\rangle_{BC}=(X\otimes I_C)\ket{\phi^+}_{BC}.
\]
Thus
\[
|\eta\rangle\langle\eta|_{BC}
=
(X\otimes I_C)\ket{\phi^+}\bra{\phi^+}_{BC}(X^\dagger\otimes I_C).
\]
Using cyclicity inside the partial trace over $BC$, we get
\[
\begin{aligned}
&\Tr_{BC}
[
(\rho_{AB}\otimes\sigma_{CD})
(I_A\otimes|\eta\rangle\langle\eta|_{BC}\otimes I_D)
] \\
&\quad =
\Tr_{BC}
[
(\rho'_{AB}\otimes\sigma_{CD})
(I_A\otimes\ket{\phi^+}\bra{\phi^+}_{BC}\otimes I_D)
],
\end{aligned}
\]
where
\[
\rho'_{AB}
=
(I_A\otimes X^\dagger)\rho_{AB}(I_A\otimes X).
\]
Since $\sX$ is closed under local filtering, $\rho'_{AB}\in\sX$.
Therefore (A), applied to $\rho'_{AB}$ and $\sigma_{CD}$, implies that the rank-one outcome above is separable. 
Hence (B) holds.

Next, (B) implies (C).  Let $M_{BC}\ge0$. 
Write its spectral decomposition as
\[
M_{BC}=\sum_k\lambda_k|\eta_k\rangle\langle\eta_k|,
\quad \lambda_k\ge0.
\]
By linearity of the partial trace,
\[
\begin{aligned}
&\Tr_{BC}
[
(\rho_{AB}\otimes\sigma_{CD})
(I_A\otimes M_{BC}\otimes I_D)
] \\
&\quad =
\sum_k \lambda_k
\Tr_{BC}
[
(\rho_{AB}\otimes\sigma_{CD})
(I_A\otimes|\eta_k\rangle\langle\eta_k|_{BC}\otimes I_D)
].
\end{aligned}
\]
Each summand is separable by (B), and SEP is a convex cone.
Thus (C) follows.

Now (C) implies (D), because the outside operation
$X\mapsto (L_A\otimes R_D)X(L_A^\dagger\otimes R_D^\dagger)$ is a local filter on the final bipartite system $A:D$ and hence preserves separability.

Finally, (D) implies (E).  
Since $\Lambda$ is separable with respect to $A|BC|D$, it admits a Kraus representation
\[
\Lambda(X)
=
\sum_j
(P_j\otimes Q_j\otimes R_j)
X
(P_j^\dagger\otimes Q_j^\dagger\otimes R_j^\dagger),
\]
where $P_j$ acts on $A$, $Q_j$ acts on $BC$, and $R_j$ acts on $D$.
Therefore
\[
\begin{aligned}
&\Tr_{BC}[\Lambda(\rho_{AB}\otimes\sigma_{CD})] \\
&=
\sum_j
(P_j\otimes R_j)
\Tr_{BC}
[
(\rho_{AB}\otimes\sigma_{CD})
(I_A\otimes Q_j^\dagger Q_j\otimes I_D)
]
(P_j^\dagger\otimes R_j^\dagger).
\end{aligned}
\]
For each $j$, $Q_j^\dagger Q_j\ge0$ is a positive operator on $BC$.
Thus each summand is separable by (D). 
Hence (E) holds.
\end{proof}

Theorem~\ref{thm:closed-under-filtering-reduction} is precisely the implication $(A)\Rightarrow(C)$.
This proves the theorem.
\end{document}